\newcommand{\Suffix}{\mathsf{Suffix}}
\newcommand{\STree}{\mathsf{STree}}
\newcommand{\DAWG}{\mathsf{DAWG}}
\newcommand{\polylog}{\mathrm{polylog}}
\newcommand{\suflink}{\mathsf{slink}}
\newcommand{\Long}{\mathit{long}}
\newcommand{\rev}[1]{\overline{#1}}
\def\W#1#2{{\mathsf{W\_link}}_{#1}({#2})}
\def\fn(#1){\fnbf{#1}}
\newtheorem{theorem}{Theorem}
\newtheorem{lemma}{Lemma}
\title{
  Pointer-Machine Algorithms for Fully-Online Construction of \\
  Suffix Trees and DAWGs on Multiple Strings
}
\author{Shunsuke Inenaga}
\affil{\textit{Department of Informatics, Kyushu University, Japan} \\
    \textit{PRESTO, Japan Science and Technology Agency, Japan} \\
\texttt{\small inenaga@inf.kyushu-u.ac.jp}
}
\date{}
\begin{document}
\maketitle

\begin{abstract}
  We deal with the problem of maintaining
  the {\em suffix tree} indexing structure
  for a fully-online collection of multiple strings,
  where a new character can be prepended
  to any string in the collection at any time.
  The only previously known algorithm for the problem,
  recently proposed by Takagi et al. [Algorithmica 82(5): 1346-1377 (2020)],
  runs in $O(N \log \sigma)$ time and $O(N)$ space \emph{on the word RAM} model,
  where $N$ denotes the total length of the strings
  and $\sigma$ denotes the alphabet size.
  Their algorithm makes heavy use of the nearest marked ancestor (NMA)
  data structure on semi-dynamic trees,
  that can answer queries and supports insertion of nodes
  in $O(1)$ amortized time on the word RAM model.
  In this paper, we present a simpler fully-online right-to-left
  algorithm that builds the suffix tree for a given string collection
  in $O(N (\log \sigma + \log d))$ time and $O(N)$ space,
  where $d$ is the maximum number of in-coming \emph{Weiner links} to a node
  of the suffix tree.
  We note that $d$ is bounded by the height of the suffix tree,
  which is further bounded by the length of the longest string in the collection.
  The advantage of this new algorithm is that it works on
  \emph{the pointer machine model}, namely, it does not use the complicated
  NMA data structures that involve table look-ups.
  As a byproduct, we also obtain a pointer-machine algorithm
  for building the \emph{directed acyclic word graph} (\emph{DAWG})
  for a fully-online left-to-right collection of multiple strings,
  which runs in $O(N (\log \sigma + \log d))$ time and $O(N)$ space
  again without the aid of the NMA data structures.
\end{abstract}


\section{Introduction}  

\subsection{Suffix trees and DAWGs}

\emph{Suffix trees} are a fundamental string data structure
with a myriad of applications~\cite{GusfieldS04}.
The first efficient construction algorithm for suffix trees,
proposed by Weiner~\cite{Weiner}, builds the suffix tree for a string
in a right-to-left online manner,
by updating the suffix tree each time a new character is prepended to the string.
It runs in $O(n \log \sigma)$ time and $O(n)$ space,
where $n$ is the length of the string and $\sigma$ is the alphabet size.

One of the most interesting features of Weiner's algorithm
is a very close relationship to
Blumer et al.'s algorithm~\cite{blumer85:_small_autom_recog_subwor_text}
that builds the \emph{directed acyclic word graph} (\emph{DAWG})
in a left-to-right online manner,
by updating the DAWG each time a new character is prepended to the string.
It is well known that the DAG of the Weiner links of the suffix tree of $T$ is equivalent to the DAWG of the reversal $rev{T}$ of $T$, or symmetrically,
the suffix link tree of the DAWG of $\rev{T}$ is equivalent to
the suffix tree of $T$.
Thus, right-to-left online construction of suffix trees
is essentially equivalent to left-to-right construction of DAWGs.
This means that Blumer et al.'s DAWG construction algorithm also runs
in $O(n \log \sigma)$ time and $O(n)$ space~\cite{blumer85:_small_autom_recog_subwor_text}.

DAWGs also support efficient pattern matching queries,
and have been applied to other important string problems such as
local alignment~\cite{DoS13},
pattern matching with variable-length don't cares~\cite{KucherovR97},
dynamic dictionary matching~\cite{HendrianIYS19},
compact online Lempel-Ziv factorization~\cite{YamamotoIBIT14},
finding minimal absent words~\cite{FujishigeTIBT16},
and finding gapped repeats~\cite{TanimuraFIIBT15},
on the input string.

\subsection{Fully online construction of suffix trees and DAWGs}

Takagi et al.~\cite{TakagiIABH20} initiated
the generalized problem of maintaining
the suffix tree for a collection of strings in a \emph{fully-online manner},
where a new character can be prepended to any string in the collection
at any time.
This fully-online scenario arises in 
real-time database systems e.g. for sensor networks or trajectories.
Takagi et al. showed that a direct application of Weiner's algorithm~\cite{Weiner}
to this fully-online setting requires to visit $\Theta(N\min(K, \sqrt N))$ nodes,
where $N$ is the total length of the strings
and $K$ is the number of strings in the collection.
Note that this leads to a worst-case 
$\Theta(N^{1.5} \log \sigma)$-time construction when $K = \Omega(\sqrt{N})$.

In their analysis, it was shown that Weiner's original algorithm
applied to a fully-online string collection
visits a total of $\Theta(N\min(K, \sqrt N))$ nodes.
This means that the amortization argument of Weiner's algorithm
for the number of nodes visited in the climbing process for inserting
a new leaf, does not work for multiple strings in the fully-online setting.
To overcome difficulty, Takagi et al. proved the three following arguments:
(1) By using $\sigma$ nearest marked ancestor (NMA) structures~\cite{westbrook92:_fast_increm_planar_testin},
one can skip the last part of the climbing process;
(2) All the $\sigma$ NMA data structures can be stored in $O(n)$ space;
(3) The number of nodes explicitly visited in the remaining part of each
climbing process can be amortized $O(1)$ per new added character.
This led to their $O(N \log \sigma)$-time and $O(N)$-space fully-online
right-to-left construction of the suffix tree for multiple strings.

Takagi et al.~\cite{TakagiIABH20}
also showed that Blumer et al.'s algorithm applied to 
a fully-online left-to-right DAWG construction 
requires at least $\Theta(N\min(K, \sqrt N))$ work as well.
They also showed how to maintain an \emph{implicit} representation
of the DAWG of $O(N)$ space which supports fully-online updates
and simulates a DAWG edge traversal in $O(\log \sigma)$ time each.
The key here was again the non-trivial use of
the aforementioned $\sigma$ NMA data structures
over the suffix tree of the reversed strings.

As was stated above,
Takagi et al.'s construction heavily relies on the use of the NMA data structures~\cite{westbrook92:_fast_increm_planar_testin}.
Albeit NMA data structures are useful and powerful,
all known NMA data structures for (static and dynamic) trees
that support $O(1)$ (amortized) time queries and updates~\cite{gabow85:_linear_time_algor_special_case,ImaiA87,westbrook92:_fast_increm_planar_testin} are quite involved,
and they are valid only on the word RAM model as they use look-up tables
that explicitly store the answers for small sub-problems.
Hence, in general, it would be preferable if one can achieve
similar efficiency without NMA data structures.

\subsection{Our contribution}

In this paper, we show how to maintain
the suffix tree for a right-to-left fully-online string collection
in $O(N (\log \sigma + \log d))$ time and $O(N)$ space,
where $d$ is the maximum number of in-coming \emph{Weiner links} to a node of the suffix tree.
Our construction does not use NMA data structures
and works in the \emph{pointer-machine model}~\cite{Tarjan79},
which is a simple computational model without address arithmetics.
We note that $d$ is bounded by the height of the suffix tree.
Clearly, the height of the suffix tree is at most the maximum length
of the strings.
Hence, the $d$ term can be dominated by the $\sigma$ term
when the strings are over integer alphabets of polynomial size in $N$,
or when a large number of strings of similar lengths are treated.
To achieve the aforementioned bounds on the pointer-machine model, we reduce the problem of maintaining
in-coming Weiner links of nodes to the \emph{ordered split-insert-find problem},
which maintains dynamic sets of sorted elements allowing for
split and insert operations, and find queries,
which can be solved in a total of $O(N \log d)$ time and $O(N)$ space.

As a byproduct of the above result, we also obtain
the \emph{first} non-trivial algorithm that
maintains an \emph{explicit} representation of the DAWG for
fully-online left-to-right multiple strings,
which runs in $O(N (\log \sigma + \log d))$ time and $O(N)$ space.
By an explicit representation, we mean that every edge
of the DAWG is implemented as a pointer.
This DAWG construction does not require complicated table look-ups
and thus also works on the pointer machine model.

\section{Preliminaries}

\subsection{String notations}

Let $\Sigma$ be a general ordered alphabet.
Any element of $\Sigma^*$ is called a \emph{string}.
For any string $T$, let $|T|$ denote its length.
Let $\varepsilon$ be the empty string, namely, $|\varepsilon| = 0$.
Let $\Sigma^+ = \Sigma \setminus \{\varepsilon\}$.
If $T = XYZ$, then $X$, $Y$, and $Z$ are called 
a \emph{prefix}, a \emph{substring}, and a \emph{suffix} of $T$, respectively.
For any $1 \leq i \leq j \leq |T|$,
let $T[i..j]$ denote the substring of $T$ that begins at position $i$
and ends at position $j$ in $T$.
For any $1 \leq i \leq |T|$, let $T[i]$ denote the $i$th character of $T$.
For any string $T$, let $\Suffix(T)$ denote the set of suffixes of $T$,
and for any set $\mathcal{T}$ of strings,
let $\Suffix(\mathcal{T})$ denote the set of suffixes of all strings in $\mathcal{T}$.
Namely, $\Suffix(\mathcal{T}) = \bigcup_{T \in \mathcal{T}} \Suffix(T)$.
For any string $T$, let $\rev{T}$ denote the reversed string of $T$,
i.e., $\rev{T} = T[|T|] \cdots T[1]$.
For any set $\mathcal{T}$ of strings,
let $\rev{\mathcal{T}} = \{\rev{T} \mid T \in \mathcal{T}\}$.

\subsection{Suffix trees and DAWGs for multiple strings}

For ease of description,
we assume that each string $T_i$ in the collection $\mathcal{T}$
terminates with a unique character $\$_i$ that does not appear
elsewhere in $\mathcal{T}$.
However, our algorithms work without $\$_i$ symbols at the right end of strings
as well.

A compacted trie is a rooted tree such that 
(1) each edge is labeled by a non-empty string,
(2) each internal node is branching, and
(3) the string labels of the out-going edges of each node begin with mutually distinct characters.
The \emph{suffix tree}~\cite{Weiner} 
for a text collection $\mathcal{T}$, denoted $\STree(\mathcal{T})$,
is a compacted trie which represents $\Suffix(\mathcal{T})$.
The \emph{string depth} of a node $v$ of $\Suffix(\mathcal{T})$
is the length of the substring that is represented by $v$.
We sometimes identify node $v$ with the substring it represents.
The suffix tree for a single string $T$ is denoted $\STree(T)$.

\begin{figure}[t]
  \begin{center}
    \includegraphics[scale=0.6]{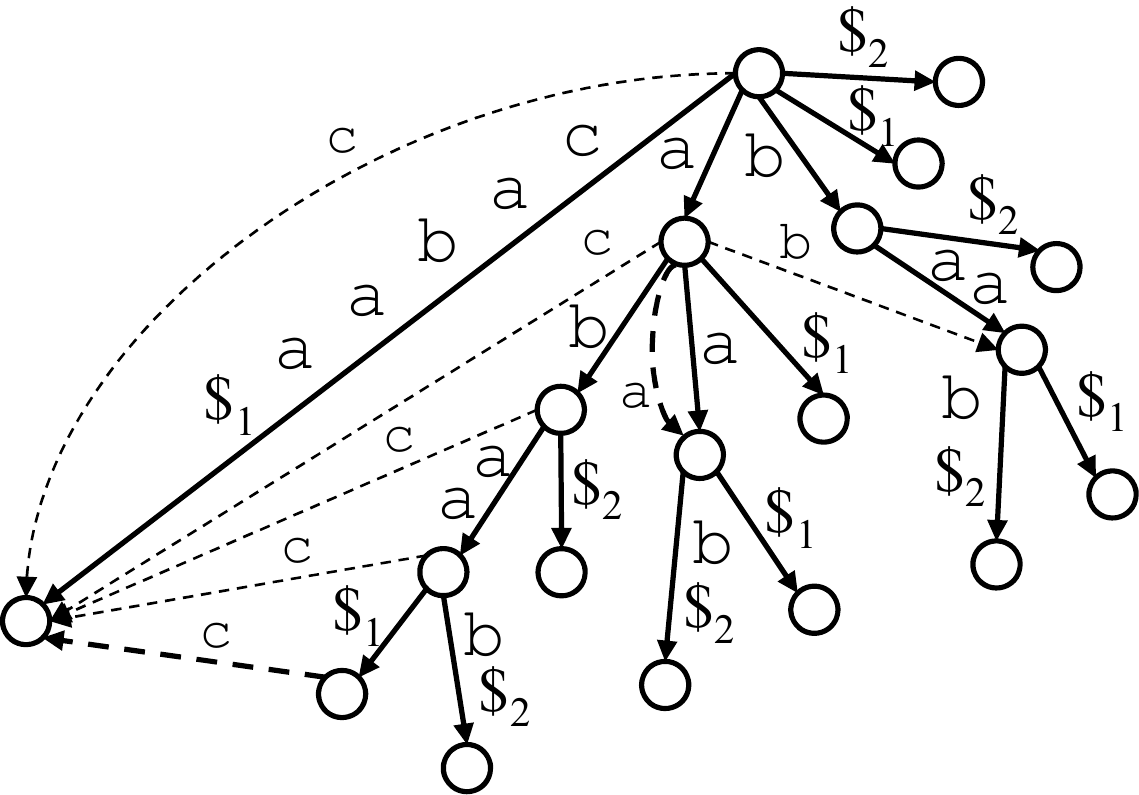}
    \hfill
    \includegraphics[scale=0.6]{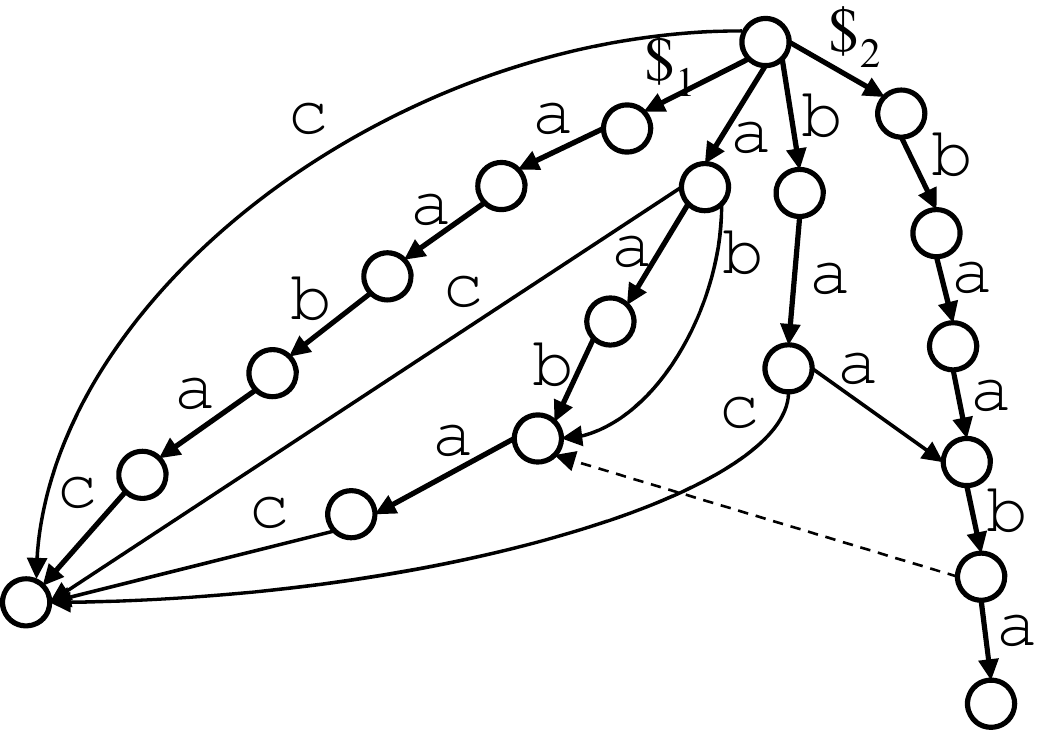}
  \end{center}
  \caption{Left: $\STree(\mathcal{T})$ for $\mathcal{T} = \{\mathtt{cabaa\$_1}, \mathtt{abaab\$_2}\}$. The bold broken arrows represent hard Weiner links, while the narrow broken arrows represent soft Weiner links. Not all Weiner links are shown for simplicity. Right: $\DAWG(\mathcal{S})$ for $\mathcal{S} = \rev{\mathcal{T}} = \{\mathtt{\$_1aabac}, \mathtt{\$_2baaba}\}$. The broken arrow represents a suffix link. Not all suffix links are shown for simplicity.}
  \label{fig:ST_and_DAWG}
\end{figure}

$\STree(\mathcal{T})$ has at most $2N-1$ nodes and thus $2N-2$ nodes,
since every internal node of $\STree(\mathcal{T})$ is branching
and there are $N$ leaves in $\STree(\mathcal{T})$.
By representing each edge label $x$ with a triple $\langle k, i, j \rangle$
of integers such that $x = T_k[i..j]$,
$\STree(\mathcal{T})$ can be stored with $O(N)$ space.

We define the \emph{suffix link} of each non-root node $av$
of $\STree(\mathcal{T})$ with $a \in \Sigma$ and $v \in \Sigma^*$,
by $\suflink(av) = v$.
For each explicit node $v$ and $a \in \Sigma$,
we also define the reversed suffix link (a.k.a. Weiner link)
by $\W{a}{v} = avx$, where $x \in \Sigma^*$ is the shortest string
such that $avx$ is a node of $\STree(\mathcal{T})$.
$\W{a}{v}$ is undefined if $av$ is not a substring of strings in $\mathcal{T}$.
A Weiner link $\W{a}{v} = avx$ is said to be \emph{hard}
if $x = \varepsilon$, and \emph{soft} if $x \in \Sigma^+$.

See the left diagram of Figure~\ref{fig:ST_and_DAWG} for
an example of $\STree(\mathcal{T})$ and Weiner links.

The \emph{directed acyclic word graph} (\emph{DAWG} in short)~\cite{blumer85:_small_autom_recog_subwor_text,Blumer87}
of a text collection $\mathcal{S}$, denoted $\DAWG(\mathcal{S})$,
is a (partial) DFA which represents $\Suffix(\mathcal{S})$.
It is proven in~\cite{Blumer87} that
$\DAWG(\mathcal{S})$ has at most $2N-1$ nodes
and $3N-4$ edges for $N \geq 3$.
Since each DAWG edge is labeled by a single character,
$\DAWG(\mathcal{S})$ can be stored with $O(N)$ space.
The DAWG for a single string $S$ is denoted $\DAWG(S)$.

A node of $\DAWG(\mathcal{S})$ corresponds to the
substrings in $\mathcal{S}$ which
share the same set of ending positions in $\mathcal{S}$.
Thus, for each node, there is a unique longest string
represented by that node.
For any node $v$ of $\DAWG(\mathcal{S})$,
let $\Long(v)$ denote the longest string represented by $v$.
An edge $(u, a, v)$ in the DAWG is called 
\emph{primary} if $|\Long(u)| + 1 = |\Long(v)|$,
and is called \emph{secondary} otherwise.
For each node $v$ of $\DAWG(\mathcal{S})$ with $|\Long(v)| \geq 1$,
let $\suflink(v) = y$, where
$y$ is the longest suffix of $\Long(v)$ which is not represented by $v$.

Suppose $S = \rev{\mathcal{T}}$.
It is known (c.f.~\cite{blumer85:_small_autom_recog_subwor_text,Blumer87,cr:94})
that there is a node $v$ in $\STree(\mathcal{T})$ iff
there is a node $x$ in $\DAWG(\mathcal{S})$
such that $\Long(x) = \rev{v}$.
Also, the hard Weiner links and the soft Weiner links
of $\STree(\mathcal{T})$ coincide with
the primary edges and the secondary edges of $\DAWG(\mathcal{S})$, respectively.
In a symmetric view, the reversed suffix links of
$\DAWG(\mathcal{S})$ coincide with the suffix tree
$\STree(\mathcal{T})$ for $\mathcal{T}$.

See Figure~\ref{fig:ST_and_DAWG} for some concrete examples of the aforementioned symmetry.
For instance, the nodes $\mathtt{abaa}$ and $\mathtt{baa}$
of $\STree(\mathcal{T})$ correspond to
the nodes of $\DAWG(\mathcal{S})$ whose longest strings are
$\rev{\mathtt{abaa}} = \mathtt{aaba}$ and
$\rev{\mathtt{baa}} = \mathtt{aab}$, respectively.
Observe that both $\STree(\mathcal{T})$ and $\DAWG(\mathcal{S})$ have
19 nodes each.
The Weiner links of $\STree(\mathcal{T})$ labeled by character $\mathtt{c}$
correspond to the out-going edges of $\DAWG(\mathcal{S})$ labeled by $\mathtt{c}$.
To see another example, the three Weiner links from node $\mathtt{a}$
in $\STree(\mathcal{T})$ labeled $\mathtt{a}$, $\mathtt{b}$, and $\mathtt{c}$
correspond to the three out-going edges of node $\{\mathtt{a}\}$
of $\DAWG(\mathcal{S})$
labeled $\mathtt{a}$, $\mathtt{b}$, and $\mathtt{c}$, respectively.
For the symmetric view, focus on the suffix link of the node
$\{\mathtt{\$_2 baab}, \mathtt{baab}\}$ of $\DAWG(\mathcal{S})$
to the node $\{\mathtt{aab}, \mathtt{ab}\}$.
This suffix link reversed corresponds to the edge labeled $\mathtt{b\$_2}$
from the node $\mathtt{baa}$ to the node $\mathtt{baab\$_2}$
in $\STree(\mathcal{T})$.

We now see that the two following tasks are essentially equivalent:
\begin{itemize}
\item[(A)] Building $\STree(\mathcal{T})$ for a fully-online
right-to-left text collection $\mathcal{T}$, using hard and soft Weiner links.
\item[(B)] Building $\DAWG(\mathcal{S})$ for a fully-online
left-to-right text collection $\mathcal{S}$, using suffix links.
\end{itemize}

\subsection{Pointer machines}

\emph{A pointer machine}~\cite{Tarjan79} is an abstract model of computation
such that the state of computation is stored as a directed graph,
where each node can contain a constant number of data (e.g. integers, symbols)
and a constant number of pointers (i.e. out-going edges to other nodes).
The instructions supported by the pointer machine model are basically
creating new nodes and pointers, manipulating data, and performing comparisons.
The crucial restriction in the pointer machine model,
which distinguishes it from the word RAM model,
is that pointer machines cannot perform address arithmetics,
namely, memory access must be performed only by
an explicit reference to a pointer.
While the pointer machine model is apparently weaker
than the word RAM model that supports address arithmetics and 
unit-cost bit-wise operations,
the pointer machine model serves as a good basis for
modeling linked structures such as trees and graphs,
which are exactly our targets in this paper.
In addition, pointer-machines are powerful enough to simulate
list-processing based languages such as LISP and Prolog (and their variants),
which have recurrently gained attention.

\section{Brief reviews on previous algorithms}

To understand why and how our new algorithms to be presented in
Section~\ref{sec:algorithms} work efficiently,
let us briefly recall the previous related algorithms. 

\subsection{Weiner's algorithm and Blumer et al.'s algorithm for a single string}
\label{sec:Weiner_Blumer}

First, we briefly review how Weiner's algorithm for a single string $T$
adds a new leaf to the suffix tree when a new character $a$
is prepended to $T$.
Our description of Weiner's algorithm slightly differs from the original one,
in that we use both hard and soft Weiner links
while Weiner's original algorithm uses hard Weiner links only
and it instead maintains Boolean vectors indicating the existence of soft Weiner links.

Suppose we have already constructed $\STree(T)$ with hard and soft Weiner links.
Let $\ell$ be the leaf that represents $T$.
Given a new character $a$, Weiner's algorithm climbs up the path
from the leaf $\ell$ until encountering the deepest ancestor $v$ of $\ell$
that has a Weiner link $\W{a}{v}$ defined.
If there is no such ancestor of $\ell$ above,
then a new leaf representing $aT$ is inserted from the root $r$ of the suffix tree.
Otherwise, the algorithm follows the Weiner link $\W{a}{v}$ and 
arrives at its target node $u = \W{a}{v}$.
There are two sub-cases:
\begin{itemize}
  \item[(1)] If $\W{a}{v}$ is a hard Weiner link,
then a new leaf $\hat{\ell}$ representing $aT$ is inserted from $u$.

  \item[(2)] If $\W{a}{v}$ is a soft Weiner link,
then the algorithm splits the incoming edge of $u$ into two edges
by inserting a new node $y$ as a new parent of $u$
such that $|y| = |v|+1$
(See also Figure~\ref{fig:Weiner_case_2}). 
A new leaf representing $aT$ is inserted from this new internal node $y$.
We also copy each \emph{out-going} Weiner link $\W{c}{u}$ from $u$
with a character $c$ as an out-going Weiner link $\W{c}{y}$ from $y$
so that their target nodes are the same (i.e. $\W{c}{u}$ = $\W{c}{y}$).
See also Figure~\ref{fig:Weiner_case_2_copy}.
Then, a new \emph{hard} Weiner link is created from $v$ to $y$ with label $a$,
in other words, an old soft Weiner link $\W{a}{v} = u$
is \emph{redirected} to a new hard Weiner link $\W{a}{v} = y$.
In addition, all the old soft Weiner links of ancestors $z$ of $v$
such that $\W{a}{z} = u$ in $\STree(T)$
have to be redirected to new soft Weiner links $\W{a}{z} = y$
in $\STree(aT)$.
These redirections can be done by keeping climbing up the path from
$v$ until finding the deepest node $x$ that has a hard Weiner link
with character $a$ pointing to the parent of $u$ in $\STree(T)$.
\end{itemize}

\begin{figure}[t]
  \begin{center}
    \includegraphics[scale=0.8]{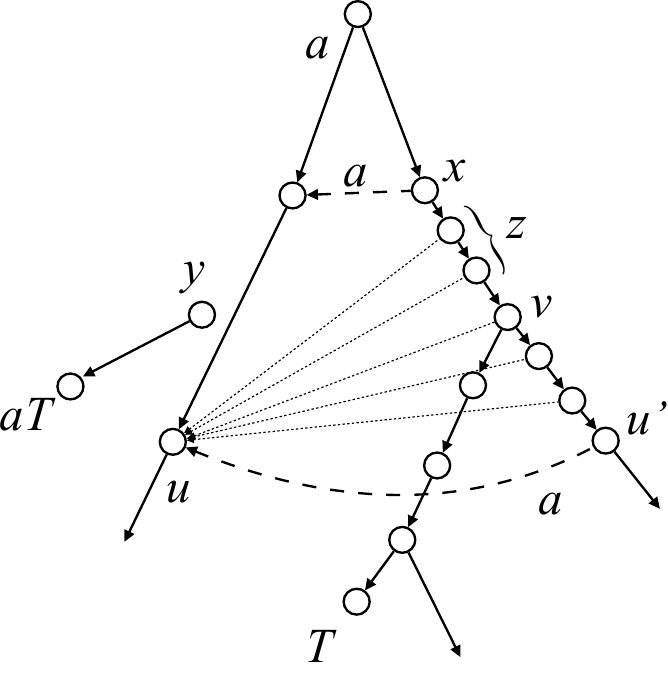}
    \hfil
    \includegraphics[scale=0.8]{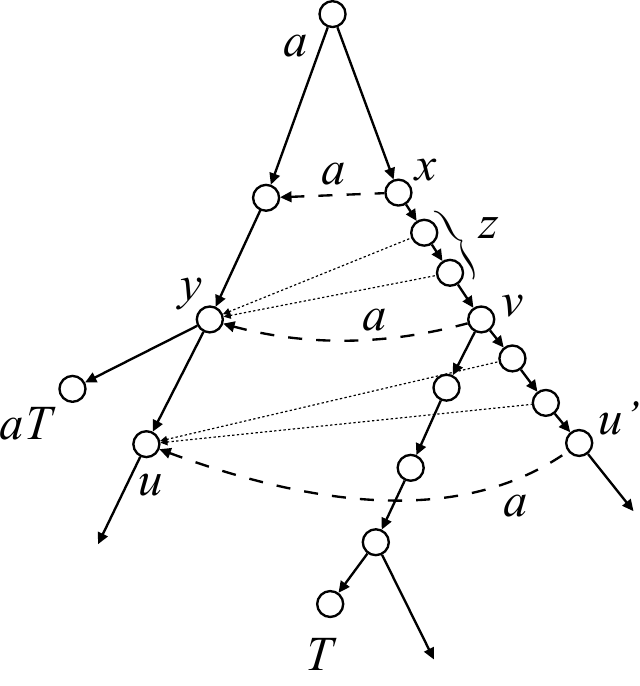}
  \end{center}
  \caption{Left: Illustration for $\STree(T)$ of Case (2) before inserting the new leaf representing $aT$. Right: Illustration for $\STree(aT)$ of Case (2) after inserting the new leaf representing $aT$. In both diagrams, thick broken arrows represent hard Winer links, and narrow broken arrows represent soft Weiner links. All these Winer links are labeled by $a$. Also, new Weiner links labeled $a$ are created from the nodes between the leaf for $T$ and $v$ to the new leaf for $aT$ (not shown in this diagram).}
  \label{fig:Weiner_case_2}
\end{figure}

\begin{figure}[t]
  \begin{center}
    \includegraphics[scale=0.85]{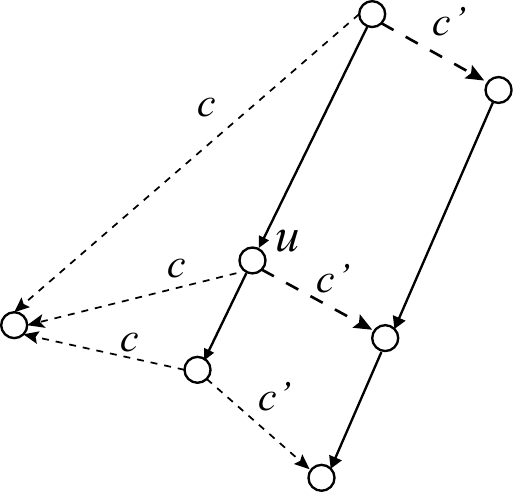}
    \hfil
    \includegraphics[scale=0.85]{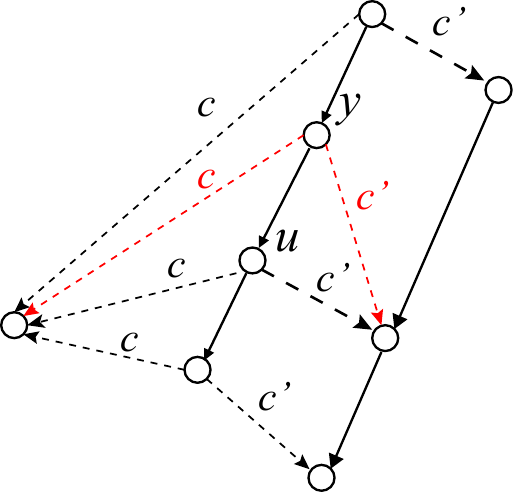}
  \end{center}
  \caption{Illustration of the copy process of the out-going Weiner links of $u$ to its new parent $y$ in Case (2). Left: Out-going Weiner links of node $u$ before the update. Right: Each out-going Winer link of node $u$ is copied to its new parent $y$, represented by a red broken arrow.}
  \label{fig:Weiner_case_2_copy}
\end{figure}

In both Cases (1) and (2) above,
new soft Weiner links $\W{a}{x} = \hat{\ell}$ are created
from every node $x$ in the path from $\ell$ to the child of $v$.

The running time analysis of the above algorithm has three phases.
\begin{itemize}
  \item[(a)] In both Cases (1) and (2),
the number of nodes from leaf $\ell$ for $T$ to $v$
is bounded by the number of newly created soft Weiner links.
This is amortized $O(1)$ per new character since
the resulting suffix tree has a total of $O(n)$ soft Weiner links~\cite{blumer85:_small_autom_recog_subwor_text},
where $n = |T|$.

  \item[(b)] In Case (2), the number of out-going Weiner links copied from $u$ to $y$
is bounded by the number of newly created Weiner link,
which is also amortized $O(1)$ per new character by the same argument as (a).

  \item[(c)] In Case (2), the number redirected soft Weiner links
is bounded by the number of nodes from $v$ to $x$.
The analysis by Weiner~\cite{Weiner} shows that
this number of nodes from $v$ to $x$ can be amortized $O(1)$.
\end{itemize}
Wrapping up (a), (b), and (c),
the total numbers of visited nodes, created Weiner links,
and redirected Weiner links
through constructing $\STree(T)$ by prepending $n$ characters are $O(n)$.
Thus Weiner's algorithm constructs $\STree(T)$
in right-to-left online manner in $O(n \log \sigma)$ time with $O(n)$ space,
where the $\log \sigma$ term comes from the cost for
maintaining Weiner links of each node in the lexicographically sorted order
by e.g. a standard balanced binary search tree.

Since this algorithm correctly maintains all (hard and soft) Weiner links,
it builds $\DAWG(S)$ for the reversed string $S = \rev{T}$
in a left-to-right manner, in $O(n \log \sigma)$ time with $O(n)$ space.
In other words, this version of Weiner's algorithm is equivalent to
Blumer et al.'s DAWG online construction algorithm.

We remark that the aforementioned version of Weiner's algorithm,
and equivalently Blumer et al.'s algorithm, work on the pointer machine model
as they do not use address arithmetics nor table look-ups.

\subsection{Takagi et al.'s algorithm for multiple strings on the word RAM}

When Weiner's algorithm is applied to
fully-online right-to-left construction of $\STree(\mathcal{T})$,
the amortization in Analysis (c) does not work.
Namely, it was shown by Takagi et al.~\cite{TakagiIABH20}
that the number of redirected
soft Weiner links is $\Theta(N\min(K, \sqrt N))$
in the fully-online setting for multiple $K$ strings.
A simpler upper bound $O(NK)$ immediately follows from an observation that
the insertion of a new leaf for a string $T_i$ in $\mathcal{T}$
may also increase the depths of the leaves for all the other $K-1$
strings $T_1, \ldots, T_{i-1}, T_{i+1}, \ldots, T_{K}$ in $\mathcal{T}$.
Takagi et al. then obtained the aforementioned improved $O(N\min(K, \sqrt N))$ upper bound,
and presented a lower bound instance that indeed requires $\Omega(N\min(K, \sqrt N))$ work.
It should also be noted that the original version of Weiner's algorithm that only maintains
Boolean indicators for the existence of soft Weiner links,
must also visit $\Theta(N\min(K, \sqrt N))$ nodes~\cite{TakagiIABH20}.

Takagi et al. gave a neat way to overcome this difficulty
by using the nearest marked ancestor (NMA) data structure~\cite{westbrook92:_fast_increm_planar_testin} for a rooted tree.
This NMA data structure allows for
making unmarked nodes, splitting edges, inserting new leaves,
and answering NMA queries in $O(1)$ amortized time each,
in the word RAM model of machine word size $\Omega(\log N)$.
Takagi et al. showed how to skip the nodes between $v$ to $x$ in $O(1)$ amortized time
using a single NMA query on the NMA data structure associated to
a given character $a$ that is prepended to $T$.
They also showed how to store $\sigma$ NMA data structures for 
all $\sigma$ distinct characters in $O(N)$ total space.
Since the amortization argument (c) is no more needed
by the use of the NMA data structures,
and since the analyses (a) and (b) still hold for fully-online multiple strings,
the total number of visited nodes was reduced to $O(N)$
in their algorithm.
This led their construction in $O(N \log \sigma)$ time and $O(N)$ space,
in the word RAM model.

Takagi et al.'s $\Theta(N\min(K, \sqrt N))$ bound also applies to
the number of visited nodes and that of redirected secondary edges
of $\DAWG(\mathcal{S})$ for multiple strings in the fully-online setting.
Instead, they showed how to simulate secondary edge traversals of
$\DAWG(\mathcal{S})$ in $O(\log \sigma)$ amortized time each,
using the aforementioned NMA structures.
We remark that their data structure is only an implicit representation
of $\DAWG(\mathcal{S})$ in the sense that
the secondary edges are not explicitly stored.

\section{Simple fully-online constructions of suffix trees and DAWGs
  on the pointer-machine model}
\label{sec:algorithms}

In this section, we present our new algorithms
for fully-online construction of suffix trees and DAWGs
for multiple strings, which work on the pointer-machine model.

\subsection{Right-to-left suffix tree construction}

In this section, we present our new algorithm that constructs
the suffix tree for a fully-online right-to-left string collection.

Consider a collection $\mathcal{T}' = \{T_1, \ldots, T_K\}$ of $K$ strings.
Suppose that we have built $\STree(\mathcal{T}')$
and that for each string $T_i \in \mathcal{T}'$ we know the leaf $\ell_i$ that represents $T_i$.

In our fully-online setting,
any new character from $\Sigma$ can be prepended to any string in
the current string collection $\mathcal{T}$.
Suppose that a new character $a \in \Sigma$ is prepended to
a string $T$ in the collection $\mathcal{T}'$,
and let $\mathcal{T} = (\mathcal{T}' \setminus \{T\}) \cup \{aT\}$
be the collection after the update.
Our task is to update $\STree(\mathcal{T}')$ to $\STree(\mathcal{T})$.

Our approach is to reduce the sub-problem of redirecting
Weiner links to the \emph{ordered split-insert-find problem}
that operates on ordered sets over dynamic universe of elements,
and supports the following operations and queries efficiently:
\begin{itemize}
\item Make-set, which creates a new list that consists only of a single element;
\item Split, which splits a given set into two disjoint sets,
so that one set contains only smaller elements than the other set;
\item Insert, which inserts a new single element to a given set;
\item Find, which answers the name of the set that a given element belongs to.
\end{itemize}

Recall our description of Weiner's algorithm in Section~\ref{sec:Weiner_Blumer}
and see Figure~\ref{fig:Weiner_case_2}.
Consider the set of in-coming Weiner links of node $u$ before updates
(the left diagram of Figure~\ref{fig:Weiner_case_2}),
and assume that these Weiner links are sorted by the length of the origin nodes.
After arriving the node $v$ in the climbing up process from the leaf for $T$,
we take the Weiner link with character $a$ and arrive at node $u$.
Then we access the set of in-coming Weiner-links of $u$ by a find query.
When we create a new internal node $y$ as the parent of the new leaf for $aT$,
we split this set into two sets, one as the set of in-coming Weiner links of $y$,
and the other as the set of in-coming Weiner links of $u$
(see the right diagram of Figure~\ref{fig:Weiner_case_2}).
This can be maintained by a single call of a split operation.

Now we pay our attention to the copying process of
Weiner links described in Figure~\ref{fig:Weiner_case_2_copy}.
Observe that each newly copied Weiner links can be inserted
by a single find operation and a single insert operation
to the set of in-coming Weiner links of $\W{u}{c}$ for each character $c$
where $\W{u}{c}$ is defined.

Now we prove the next lemma:
\begin{lemma}
  Let $f$ denote the operation and query time of a linear-space algorithm
  for the ordered split-insert-find problem.
  Then, we can build the suffix tree for a fully-online right-to-left
  string collection of total length $N$ in a total of $O(N (f + \log \sigma))$ time
  and $O(N)$ space.
\end{lemma}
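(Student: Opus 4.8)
The plan is to represent every Weiner link as a single element of the ordered split-insert-find structure, stored inside the set associated with its current target node, so that a whole batch of redirections becomes one split. Concretely, I would keep for each explicit node $w$ of the suffix tree one ordered set $I(w)$ whose elements are exactly the in-coming Weiner links of $w$, ordered by the string depth $|z|$ of their origin node $z$, and make the \emph{name} of this set a back-pointer to $w$ itself. The first thing to observe is that every in-coming Weiner link of $w$ carries the same label, namely the first character $b$ of the string $w$ (identifying $w$ with the substring it represents): indeed $\W{b}{z} = w$ forces $w = bzx$, so the origins $z$ are precisely the explicit nodes with $bz$ a prefix of $w$, and hence they are totally ordered by $|z|$. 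This makes $I(w)$ a genuine sorted set. At each node $z$ I would additionally keep, indexed by character in a balanced binary search tree, pointers to the elements $e_{z,c}$ representing its out-going Weiner links, so that a traversal of $\W{c}{z}$ is realised by locating $e_{z,c}$ in $O(\log\sigma)$ time and returning $\mathrm{find}(e_{z,c})$, the name of the set currently containing it.

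Given these invariants, I would run exactly Weiner's right-to-left procedure of Section~\ref{sec:Weiner_Blumer}, but replace the explicit redirection loop (Analysis (c)) by one split. After climbing from $\ell$ to the deepest ancestor $v$ carrying an $a$-link and following it to $u = \mathrm{find}(e_{v,a})$, the insertion of the new internal node $y$ with $|y| = |v|+1$ rests on the claim that the in-coming links that must move to $y$ are exactly those whose origin $z$ satisfies $|z|\le|v|$, while those with $|z|>|v|$ keep pointing to $u$ (see Figure~\ref{fig:Weiner_case_2}). Since $I(u)$ is sorted by $|z|$, this is a single split of $I(u)$ at the element $e_{v,a}$, producing the set named $y$ (origins of length $\le|v|$, with $e_{v,a}$ itself becoming the unique hard link of $y$ because $av=y$) and leaving the remainder named $u$. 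The copy step of Figure~\ref{fig:Weiner_case_2_copy} is then, for each out-going link $\W{c}{u}=w$, one $\mathrm{find}$ to reach $I(w)$ followed by one ordered $\mathrm{insert}$ of a fresh element for origin $y$; the new soft links running from the climbed path to the new leaf $\hat\ell$ are handled analogously by make-set and ordered inserts into $I(\hat\ell)$.

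For the cost, I would argue that the whole construction issues $O(N)$ operations of each kind. There is at most one split per prepended character, hence $O(N)$ splits; the total number of inserts equals the total number of Weiner links ever created, which is $O(N)$ since the final suffix tree has $O(N)$ hard and $O(N)$ soft links and links are only created or redirected, never deleted, with redirections absorbed into splits rather than inserts; make-sets number $O(N)$ (one per created node); and the finds used to follow the arrival link and to reach each copy target are $O(1)$ amortised per character by Analyses (a) and (b). Crucially, Analysis (c) is no longer needed, because the monolithic split performs every redirection implicitly: the origins $z$ are never touched individually, only the set membership of their elements changes. Thus the split-insert-find structure contributes $O(Nf)$ time and, being linear-space by hypothesis, $O(N)$ space, while the per-node character-indexed search trees contribute the additional $O(N\log\sigma)$ term and $O(N)$ space, giving the claimed $O(N(f+\log\sigma))$ time and $O(N)$ space.

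The hard part will be the correctness of this single-split replacement: I must prove that the threshold $|v|$ cleanly separates the links that have to move from those that must stay, i.e. that after inserting $y$ on the edge into $u$ the shortest node-extension of $az$ becomes $y$ exactly when $|z|\le|v|$ and remains $u$ otherwise, and that this set coincides exactly with the links the explicit climb from $v$ to $x$ would have redirected in Analysis (c). A secondary obstacle is bookkeeping the set-name/node correspondence across splits so that every subsequent $\mathrm{find}$ still reports the correct current target at no extra amortised cost; I expect this to follow from retaining the old back-pointer on whichever post-split set keeps the old name and installing a fresh back-pointer to $y$ on the other.
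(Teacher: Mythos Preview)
Your proposal is correct and follows essentially the same approach as the paper: maintain, for each node, the ordered set of its in-coming Weiner links keyed by origin depth, replace the redirection climb of Analysis~(c) by a single split at $|v|$, and bound splits, inserts, and finds by $O(N)$ via Analyses~(a) and~(b) and the $O(N)$ bound on the total number of Weiner links. You supply more implementation detail (the back-pointer realisation of set names, the per-node character dictionary of element handles) and explicitly flag the correctness obligations the paper leaves implicit, but the argument and the accounting are the same.
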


\begin{proof}
  The number of split operations is clearly bounded by
  the number of leaves, which is $N$.
  Since the number of Weiner links is at most $3N-4$,
  the number of insert operations is also bounded by $3N-4$.
  The number of find queries is thus bounded by $N + 3N-4 = 4N-4$.
  By using a linear-space split-insert-find data structure,
  we can maintain the set of in-coming Weiner links
  for all nodes in a total of $O(N f)$ time with $O(N)$ space.

  Given a new character $a$ to prepend to a string $T$,
  we climb up the path from the leaf for $T$ and find the deepest ancestor $v$
  of the leaf for which $\W{a}{v}$ is defined.
  This can be checked in $O(\log \sigma)$ time at each visited node,
  by using a balanced search tree.
  Since we do not climb up the nodes $z$ (see Figure~\ref{fig:Weiner_case_2})
  for which the soft Weiner links with $a$ are redirected, 
  we can use the same analysis (a) as in the case of a single string.
  This results in that the number of visited nodes in our algorithm is $O(N)$.
  Hence we use $O(N \log \sigma)$ total time for finding
  the deepest node which has a Weiner link for the prepended character $a$.

  Overall, our algorithm uses $O(N (f + \log \sigma))$ time and $O(N)$ space.
\end{proof}

Our ordered split-insert-find problem is a special case of
the union-split-find problem on ordered sets,
since each insert operation can be trivially simulated by make-set and union operations.
Link-cut trees of Sleator and Tarjan~\cite{SleatorT83} for a dynamic forest
support make-tree, link, cut operations and find-root queries
in $O(\log d)$ time each.
Since link-cut trees can be used to path-trees,
make-set, insert, split, and find in the ordered split-insert-find problem
can be supported in $O(\log d)$ time each.
%
Since link-cut trees work on the pointer machine model,
this leads to a pointer-machine algorithm
for our fully-online right-to-left construction of the suffix tree
for multiple strings with $f = O(\log d)$.
Here, in our context, $d$ denotes the maximum number of in-coming Weiner links
to a node of the suffix tree.

A potential drawback of using link-cut trees is that
in order to achieve $O(\log d)$-time operations and queries,
link-cut trees use some auxiliary data structures
such as splay trees~\cite{SleatorT85} as its building block.
Yet, in what follows, we describe that
our ordered split-insert-find problem can be solved by
a simpler balanced tree, AVL-trees~\cite{AVL},
retaining $O(N (\log \sigma + \log d))$-time and $O(N)$-space complexities.

\begin{theorem} \label{theo:right-to-left-suffix-tree}
  There is an AVL-tree based pointer-machine algorithm
  that builds the suffix tree for a fully-online right-to-left multiple strings
  of total length $N$ in $O(N (\log \sigma + \log d))$ time with $O(N)$ space,
  where $d$ is the maximum number of in-coming Weiner links to a suffix tree node
  and $\sigma$ is the alphabet size.
\end{theorem}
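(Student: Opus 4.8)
The plan is to invoke the preceding lemma, which has already reduced the entire construction to the ordered split-insert-find problem; it therefore remains only to exhibit a linear-space pointer-machine data structure solving that problem with per-operation cost $f = O(\log d)$. I would represent each ordered set --- namely the set of in-coming Weiner links of a fixed suffix-tree node $u$, keyed by the string depth $|w|$ of the origin $w$ of each link --- as an AVL-tree~\cite{AVL}. Since every such set holds at most $d$ elements by the definition of $d$, each AVL-tree has height $O(\log d)$, which is exactly the budget we aim to meet for every operation.

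Next I would spell out the four operations on these AVL-trees. Make-set builds a singleton tree in $O(1)$ time. Insert performs an ordinary binary-search-tree insertion guided by $O(1)$-time integer comparisons of string depths, followed by the usual AVL rebalancing rotations, in $O(\log d)$ time. Split cuts a set at the threshold depth $|v|$ into the links that move to the new parent $y$ (origins of depth $\le |v|$) and those that remain at $u$ (origins of depth $> |v|$); I would implement this with the standard AVL split-via-join procedure, where a join of two AVL-trees whose heights differ by $h$ costs $O(h+1)$ and the height differences encountered along one root-to-leaf path telescope so that the whole split costs $O(\log d)$ in the worst case. For Find, I would attach to each tree a \emph{name}: a pointer from its current root to the suffix-tree node owning the set, together with a forward pointer from that node back to the root. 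Find$(e)$ then walks from the element $e$ up the parent pointers to the root --- an $O(\log d)$ climb --- and reports the name stored there. Crucially, every make-set, insert, and split already touches the roots of the trees it produces, so refreshing the two root$\leftrightarrow$node pointers after each operation costs only $O(1)$. All of these are pure pointer manipulations with $O(1)$ integer-key comparisons and no address arithmetic, hence they are legal on the pointer-machine model.

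The step I expect to be the main obstacle is guaranteeing that Find keeps returning the \emph{correct} current target even though splits repeatedly change which tree an element lives in and which node is its root. The key points to verify are (i) that the AVL split partitions the links exactly by origin depth, matching the redirection rule of Case (2) of Weiner's algorithm, so that after splitting $u$ each link lands in the tree of its genuine post-update target; and (ii) that the root$\leftrightarrow$node name pointers are restored at the end of \emph{every} operation that can alter a root, so that an element's upward walk always terminates at a root carrying the up-to-date owner. Obtaining the stated \emph{worst-case} $O(\log d)$ bound for split (rather than an amortized one) through the telescoping join argument is the only genuinely delicate accounting, and I would present it via the standard height-difference analysis of AVL join.

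Finally I would close the count. The total number of elements ever created equals the number of Weiner links, at most $3N-4$, and the number of sets is at most the number of nodes, $O(N)$; since each AVL-node and each name pointer uses $O(1)$ space, the structure occupies $O(N)$ space. Plugging $f = O(\log d)$ into the preceding lemma then yields the claimed $O(N(\log \sigma + \log d))$ time and $O(N)$ space on the pointer machine, completing the proof.
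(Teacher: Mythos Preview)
Your proposal is correct and follows essentially the same approach as the paper: represent each set of in-coming Weiner links by an AVL tree keyed by origin string depth, maintain a root$\leftrightarrow$owner pointer, and realize make-set, insert, split, and find in $O(\log d)$ time each, then plug $f=O(\log d)$ into the preceding lemma. The only difference is that the paper additionally shows how to bulk-build the AVL tree of in-coming Weiner links for each new leaf in linear total time (exploiting that the climb delivers them already sorted), but this refinement is not needed for the stated $O(N(\log\sigma+\log d))$ bound since one-by-one insertion already costs only $O(N\log d)$.
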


\begin{proof}
For each node $u$ of the suffix tree $\STree(\mathcal{T}')$ before update,
let $S(u) = \{|x| \mid \W{a}{x} = u\}$ where $a = u[1]$,
namely, $S(u)$ is the set of the string depths of the origin nodes
of the in-coming Weiner links of $u$.
We maintain an AVL tree for $S(u)$ with the node $u$,
so that each in-coming Weiner link for $u$ points
to the corresponding node in the AVL tree for $S(u)$.
The root of the AVL tree is always linked to the suffix tree node $u$,
and each time another node in the AVL tree
becomes the new root as a result of node rotations,
we adjust the link so that it points to $u$ from the new root of the AVL tree.

This way, a find query for a given Weiner link is reduced to
accessing the root of the AVL tree that contains the given Weiner link, 
which can be done in $O(\log S(u)) \subseteq O(\log d)$ time.

Inserting a new element to $S(u)$ can also be done in $O(\log S(u)) \subseteq O(\log d)$ time.

Given an integer $k$, let $S_1$ and $S_2$ denote the subset of $S(u)$
such that any element in $S_1$ is not larger than $k$,
any element in $S_2$ is larger than $k$, and $S_1 \cup S_2 = S(u)$.
It is well known that we can split the AVL tree for $S(u)$ into
two AVL trees for $S_1$ and for $S_2$
in $O(\log S(u)) \subseteq O(\log d)$ time
(c.f.~\cite{Knuth1998}).
In our context, $k$ is the string depth of the deepest node $v$
that is a Weiner link with character $a$ in the upward path from
the leaf for $T$.
This allows us to maintain $S_1 = S(y)$ and $S_2 = S(u)$
in $O(\log d)$ time in the updated suffix tree $\STree(\mathcal{T})$.

When we create the in-coming Weiner links
labeled $a$ to the new leaf $\hat{\ell}$ for $aT$,
we first perform a make-set operation which builds
an AVL tree consisting only of the root.
If we na\"ively insert each in-coming Weiner link to the AVL tree one by one,
then it takes a total of $O(N \log d)$ time.
However, we can actually perform this process in $O(N)$ total time
even on the pointer machine model:
Since we climb up the path from the leaf $\ell$ for $T$,
the in-coming Weiner links are already sorted in decreasing order
of the string depths of the origin nodes.
We create a family of maximal complete binary trees of size $2^h-1$ each,
arranged in decreasing order of $h$.
This can be done as follows:
Initially set $r \leftarrow |S(\hat{\ell})|$.
We then greedily take the largest $h$ such that $2^h-1 \leq r$,
and then update $r \leftarrow r - (2^h-1)$ and search for
the next largest $h$ and so on.
These trees can be easily created in $O(|S(\hat{\ell})|)$ total time
by a simple linear scan over the sorted list of the in-coming Weiner links.
Since the heights $h$ of these complete binary search trees are
monotonically decreasing, and since all of these binary search trees
are AVL trees,
one can merge all of them into a single AVL tree
in time linear in the height of the final AVL tree (c.f.~\cite{Knuth1998}),
which is bounded by $O(h) = O(\log S(\hat{\ell}))$.
Thus, we can construct the initial AVL tree for the in-coming Weiner links
of each new leaf $\hat{\ell}$ in $O(|S(\hat{\ell})|)$ time.
Since the total number of Weiner links is $O(N)$,
we can construct the initial AVL trees for the in-coming Weiner links
of all new leaves in $O(N)$ total time.

Overall, our algorithm works in $O(N (\log \sigma + \log d))$ time with $O(N)$ space.
\end{proof}

\subsection{Left-to-right DAWG construction}

The next theorem immediately follows from Theorem~\ref{theo:right-to-left-suffix-tree}.
\begin{theorem} \label{theo:left-to-right-DAWG}
  There is an AVL-tree based pointer-machine algorithm
  that builds an explicit representation of the DAWG for
  a fully-online left-to-right multiple strings
  of total length $N$ in $O(N (\log \sigma + \log d))$ time with $O(N)$ space,
  where $d$ is the maximum number of in-coming edges of a DAWG node
  and $\sigma$ is the alphabet size.
  This representation of the DAWG allows each edge traversal in
  $O(\log \sigma + \log d)$ time.
\end{theorem}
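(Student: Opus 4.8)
The plan is to derive this theorem from Theorem~\ref{theo:right-to-left-suffix-tree} by running its algorithm on the reversed collection and reinterpreting its output through the suffix-tree/DAWG symmetry recalled in the preliminaries. Concretely, I would take $\mathcal{S}$ to be the fully-online left-to-right collection in question and set $\mathcal{T} = \rev{\mathcal{S}}$; appending a character $a$ to a string of $\mathcal{S}$ is then precisely prepending $a$ to the corresponding string of $\mathcal{T}$, so tasks (A) and (B) are one and the same computation. Under the symmetry, a node $v$ of $\STree(\mathcal{T})$ corresponds to the node $x$ of $\DAWG(\mathcal{S})$ with $\Long(x) = \rev{v}$, and each Weiner link $\W{a}{v} = u$ coincides with a DAWG edge labeled $a$ from the node of $v$ to the node of $u$ (hard and soft Weiner links being the primary and secondary edges).

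First I would simply run the algorithm of Theorem~\ref{theo:right-to-left-suffix-tree} on $\mathcal{T}$. That algorithm already maintains every hard and soft Weiner link as an explicit object --- each link is an element of the AVL tree $S(u)$ of in-coming links of its target $u$ --- so by the correspondence above it is simultaneously maintaining every primary and secondary edge of $\DAWG(\mathcal{S})$ as an explicit pointer, which is exactly the explicit representation we want. Because the out-going Weiner links of $v$ are the out-going DAWG edges of the node of $v$, the in-coming Weiner links of $u$ are the in-coming DAWG edges of the node of $u$; hence the parameter $d$, the maximum number of in-coming Weiner links of a suffix-tree node, equals the maximum in-degree of a DAWG node, as claimed. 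The bounds $O(N(\log\sigma + \log d))$ time and $O(N)$ space therefore transfer verbatim.

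It then remains to account for edge traversal on this structure. To follow the out-going edge labeled $a$ from a DAWG node, I would look up $a$ in the character-indexed balanced search tree of out-going Weiner links kept at the corresponding suffix-tree node $v$, obtaining the edge object in $O(\log\sigma)$ time, and then recover its current target by a find query --- ascending to the root of the AVL tree of in-coming links that contains the object --- in $O(\log d)$ time. Summing the two steps yields the claimed $O(\log\sigma + \log d)$ traversal time.

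The only genuinely substantive point, and where I expect the subtlety to lie, is that the target of an edge is \emph{not} stored as a static pointer: node splits redirect an entire lower portion of a node's in-coming links at once, so explicit target pointers would be too costly to maintain, and the endpoint must instead be resolved lazily through the find mechanism --- which is precisely the source of the $\log d$ term in the traversal cost. Verifying that this lazy resolution stays consistent under the split and copy operations is exactly what Theorem~\ref{theo:right-to-left-suffix-tree} already guarantees, so no new obstacle arises and the theorem does indeed follow immediately.
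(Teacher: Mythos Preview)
Your proposal is correct and follows essentially the same approach as the paper: the construction bounds are inherited directly from Theorem~\ref{theo:right-to-left-suffix-tree} via the suffix-tree/DAWG symmetry, and edge traversal is implemented exactly as you describe, by an $O(\log\sigma)$ lookup of the out-going $a$-edge followed by an $O(\log d)$ find query climbing to the root of the target's AVL tree. Your discussion of why $d$ coincides in both formulations and why the endpoint must be resolved lazily is more explicit than the paper's proof, but the argument is the same.
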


\begin{proof}
  The correctness and the complexity of construction are immediate from Theorem~\ref{theo:right-to-left-suffix-tree}.

  Given a character $a$ and a node $v$ in the DAWG,
  we first find the out-going edge of $v$ labeled $a$ in $O(\log \sigma)$ time.
  If it does not exist, we terminate.
  Otherwise, we take this $a$-edge and arrive at the corresponding node in the
  AVL tree for the destination node $u$ for this $a$-edge.
  We then perform a find query on the AVL tree and obtain $u$ in $O(\log d)$ time.
\end{proof}

We emphasize that Theorem~\ref{theo:left-to-right-DAWG}
gives the \emph{first} non-trivial algorithm that builds
an explicit representation of the DAWG for fully-online multiple strings.
Recall that a direct application of Blumer et al.'s algorithm
to the case of fully-online $K$ multiple strings
requires to visit $\Theta(N\min(K, \sqrt N))$ nodes in the DAWG,
which leads to $O(N\min(K, \sqrt N) \log \sigma) = O(N^{1.5} \log \sigma)$-time construction for $K = \Theta(\sqrt{N})$.

It should be noted that after all the $N$ characters have been processed,
it is easy to modify, in $O(N)$ time in an offline manner,
this representation of the DAWG so that
each edge traversal takes $O(\log \sigma)$ time.

\subsection{On optimality of our algorithms}

It is known that sorting a length-$N$ sequence of $\sigma$ distinct characters
is an obvious lower bound
for building the suffix tree~\cite{Farach-ColtonFM00} or alternatively the DAWG.
This is because, when we build the suffix tree or the DAWG
where the out-going edges of each node are sorted in the lexicographical order,
then we can obtain a sorted list of characters at their root.
Thus, $\Omega(N \log \sigma)$ is a comparison-based model lower bound for building
the suffix tree or the DAWG.
Since Takagi et al.'s $O(N \log \sigma)$-time
algorithm~\cite{TakagiIABH20} works only on the word RAM model,
in which faster integer sorting algorithms exist,
it would be interesting to explore some cases where
our $O(N (\log \sigma + \log d))$-time algorithms for a weaker model of computation
can perform in optimal $O(N \log \sigma)$ time.

It is clear that the maximum number $d$ of in-coming Weiner links
to a node is bounded by the total length $N$ of the strings.
Hence, in case of integer alphabets of size $\sigma = N^{O(1)}$,
our algorithms run in optimal $O(N \log \sigma) = O(N \log N)$ time.

For the case of smaller alphabet size $\sigma = \polylog(N)$,
the next lemma can be useful:

\begin{lemma}
  The maximum number $d$ of in-coming Weiner links
  is less than the height of the suffix tree.
\end{lemma}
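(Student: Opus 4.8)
The plan is to bound, for a single fixed node $u$ of the suffix tree, the number of Weiner links that point to $u$, and then to observe that this quantity never reaches the height. I would begin with the structural shape of an in-coming Weiner link. If $\W{a}{v} = u$, then by definition $u = avx$ for the shortest such $x$, so $a = u[1]$ and $av$ is a prefix of $u$. Stripping the leading $a$, this says exactly that $v$ is a prefix of $u[2..|u|] = \suflink(u)$. Hence every origin $v$ of an in-coming Weiner link of $u$ is an ancestor of $\suflink(u)$ (possibly $\suflink(u)$ itself, which is the unique hard one), and every such link carries the same label $a = u[1]$.

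Next I would count. Because distinct Weiner links have distinct origins and all origins are prefixes of the single string $\suflink(u)$, they lie on the one root-to-$\suflink(u)$ path and have pairwise distinct string depths, each at most $|\suflink(u)| = |u|-1$. Thus the number of in-coming Weiner links of $u$ is at most $|u|$, i.e. at most the string depth of $u$. Finally, the string depth of any node is at most that of the deepest leaf, which equals the length of the longest string of $\mathcal{T}$; interpreting the height of the suffix tree as this maximum string depth, the number of in-coming Weiner links of $u$ is bounded by the height.

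To sharpen this to a strict inequality I would discount the origin of length $0$: the link $\W{a}{\varepsilon}$ from the root is nothing but the out-going edge of the root whose label starts with $a$, so it is already recorded as a tree edge rather than as a separately stored Weiner link. The remaining origins then have string depths in $\{1,\dots,|u|-1\}$, so their number is at most $|u|-1 < |u| \le$ the height, which yields $d < \text{height}$ after maximizing over $u$.

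The step I expect to be the real obstacle is exactly this last one — turning $\le$ into the claimed strict $<$ and fixing the precise meaning of height. The plain counting argument is robust and gives $d \le |u| \le \text{height}$, but it can be tight (a longest-string leaf $u$ all of whose suffix-link ancestors happen to be explicit nodes makes the in-coming links fill every length $0,\dots,|u|-1$), so strictness genuinely hinges on treating the root's Weiner link as an ordinary root edge. I would therefore fix the height to be the maximum string depth and verify that the hard link from $\suflink(u)$ together with the soft links from shallower ancestors are all captured by the single clause ``$v$ is a prefix of $\suflink(u)$,'' so that no origin is missed or double-counted.
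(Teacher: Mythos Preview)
Your argument is essentially the paper's: all in-coming Weiner links of $u$ share the label $a=u[1]$, so their origins are ancestors of $u[2..|u|]=\suflink(u)$ lying on a single root-to-node path. The paper's proof states exactly this in two sentences and stops there, without spelling out the counting or the strict inequality; your attempt to secure strictness by discarding the root's link goes beyond what the paper does (and is not supported by the paper's own definition of $\W{a}{v}$, which is perfectly well defined at $v=\varepsilon$), so the looseness you flagged is present in the original as well.
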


\begin{proof}
  For any node $u$ in the suffix tree,
  all in-coming Weiner links to $u$ is labeled by the same character
  $a$, which is the first character of the substring represented by $u$.
  Therefore, all in-coming Weiner links to $u$ are from the nodes
  in the path between the root and the node $u[2..|u|]$.
\end{proof}

We note that the height of the suffix tree for multiple strings
is bounded by the length of the longest string in the collection.
In many applications such as time series from sensor data,
it would be natural to assume that
all the $K$ strings in the collection have similar lengths.
Hence, when the collection consists of $K = N / \polylog(N)$ strings of length $\polylog(N)$ each, we have $d = \polylog(N)$.
In such cases, our algorithms run in optimal $O(N \log \sigma) = O(N \log \log N)$ time.

\begin{figure}[tbh]
  \centerline{
    \includegraphics[scale=0.4]{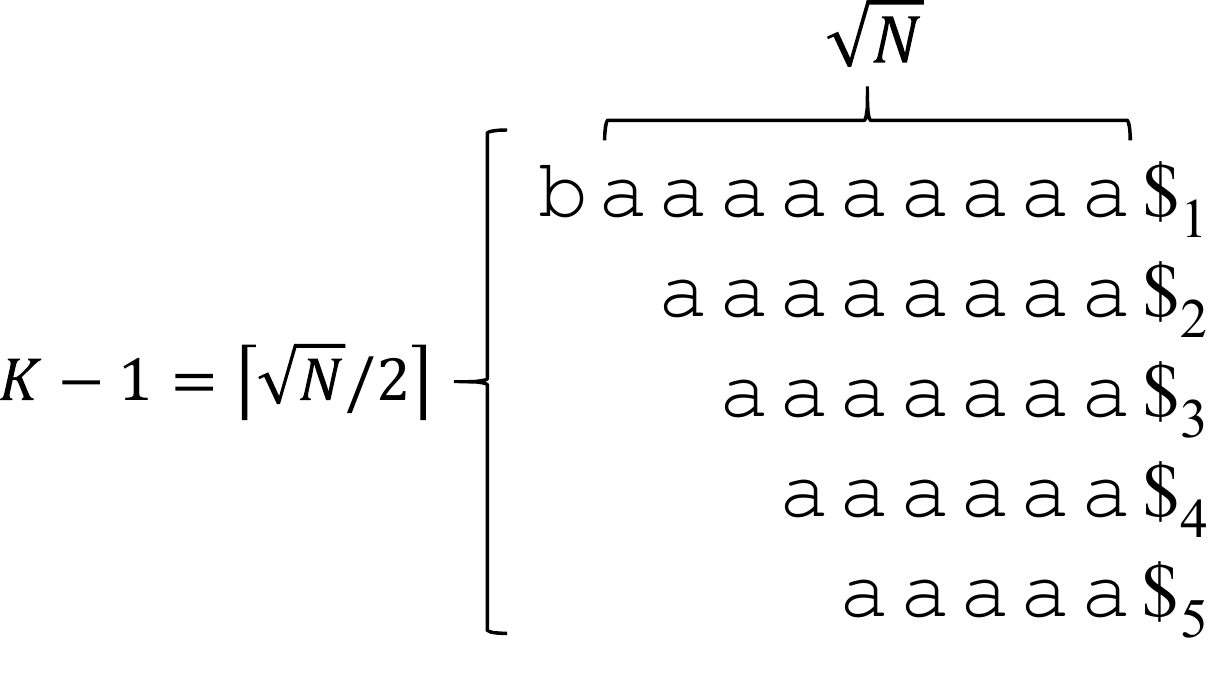}
    \hspace*{2cm}
    \includegraphics[scale=0.45]{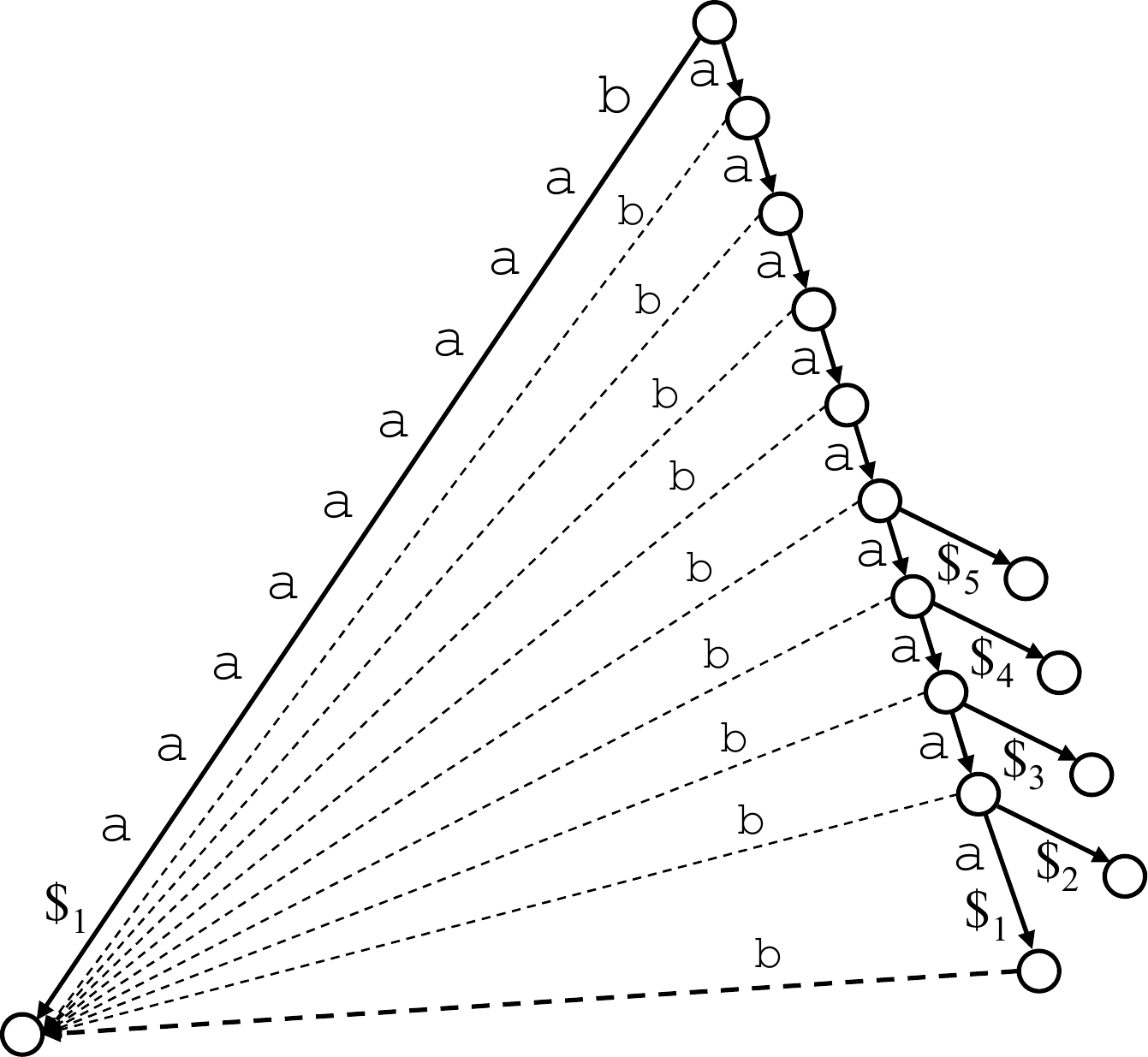}
  }
    \caption{Left: The $K-1 = \lceil \sqrt{N}/2 \rceil$ strings where character $\mathtt{b}$ has been prepended only to the first string $T_1$. Right: The corresponding part of the suffix tree. Dashed arrows represent Weiner links with character $\mathtt{b}$.}
  \label{fig:lowerbound1}
\end{figure}

\begin{figure}[tbh]
  \centerline{
    \includegraphics[scale=0.4]{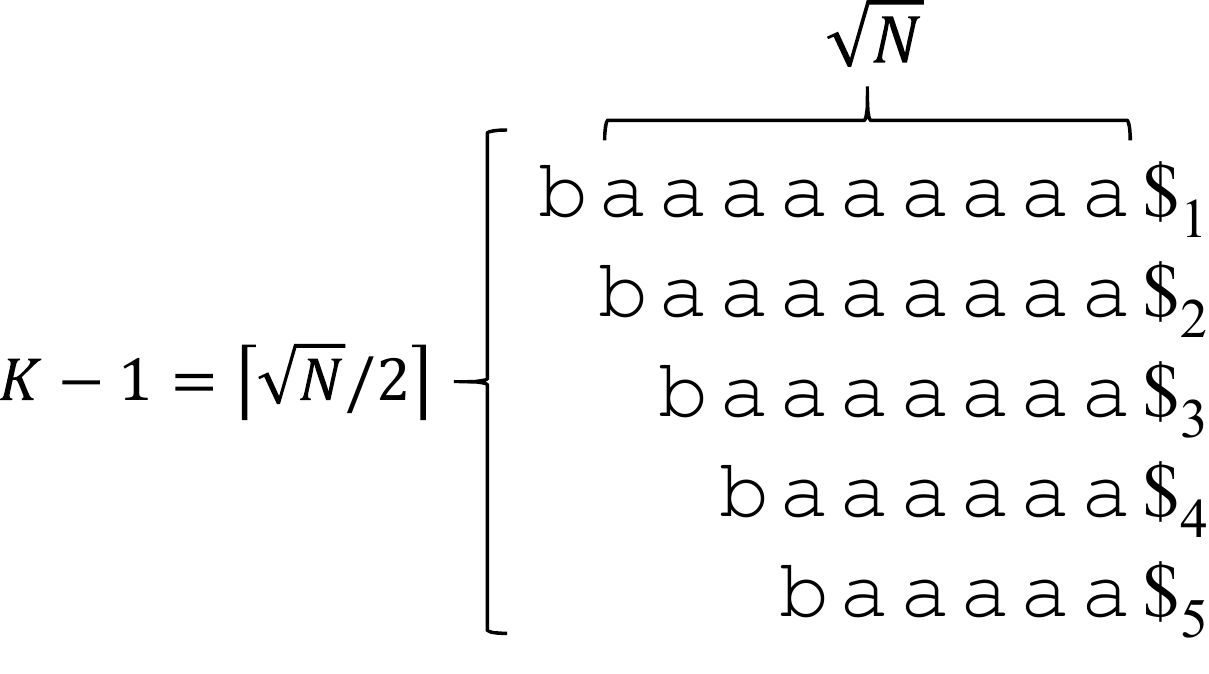}
    \hspace*{2cm}
    \includegraphics[scale=0.45]{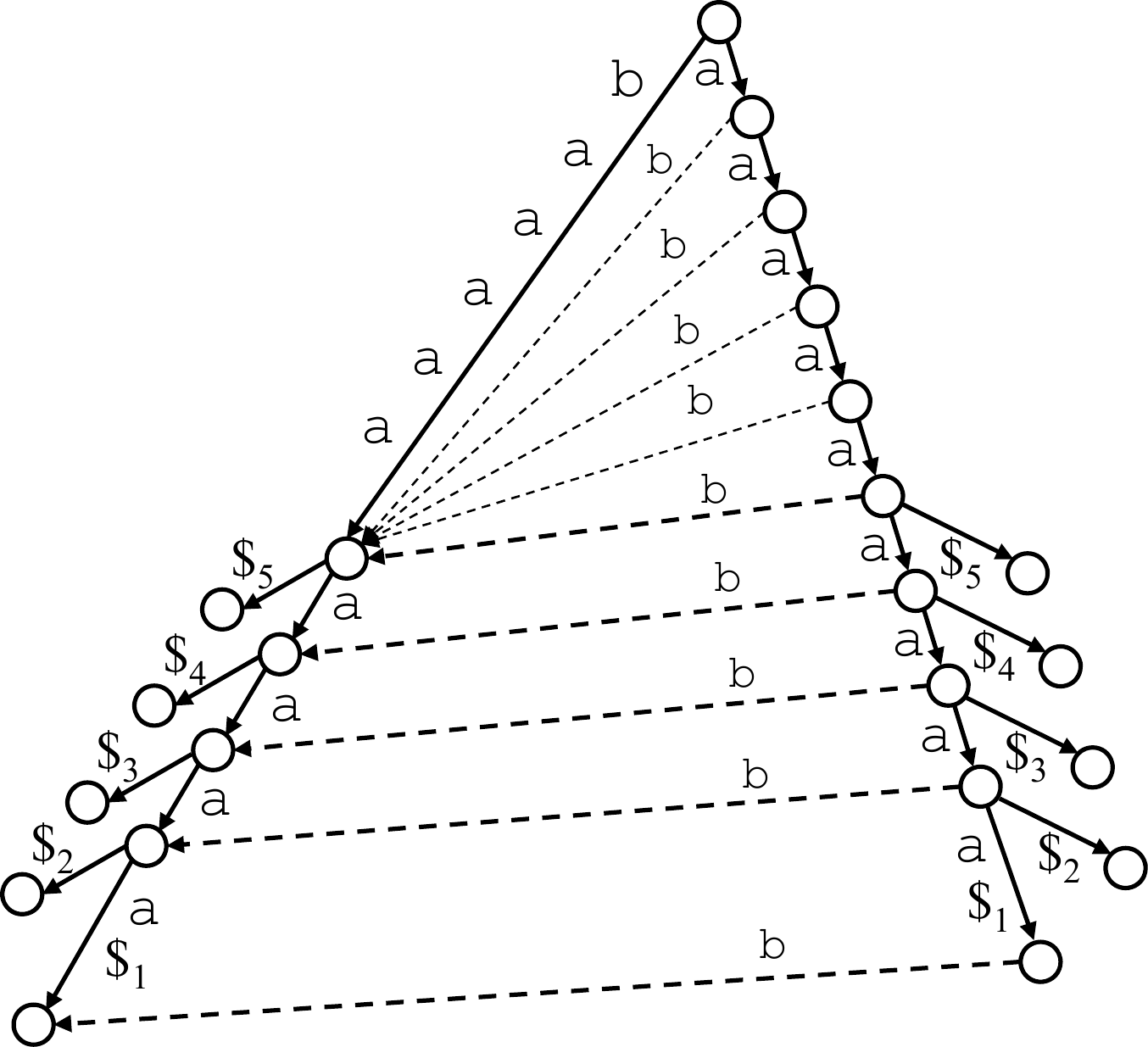}
  }
    \caption{Left: The $K-1 = \lceil \sqrt{N}/2 \rceil$ strings where character $\mathtt{b}$ has been prepended to all of them. Right: The corresponding part of the suffix tree after the updates. Each time a new leaf is created, $\Theta(\sqrt{N})$ in-coming Weiner links were involved in a split operation on the AVL tree and it takes $O(\log N)$ time.}
  \label{fig:lowerbound2}
\end{figure}

The next lemma shows some instance over a binary alphabet of size $\sigma = 2$,
which requires a certain amount of work for the splitting process.

\begin{lemma}
  There exist a set of fully-online multiple strings
  over a binary alphabet
  such that the node split procedure of our algorithms
  takes $O(\sqrt{N} \log N)$ time.
\end{lemma}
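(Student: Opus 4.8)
The plan is to exhibit an explicit binary instance on which the split operations alone perform $\Theta(\sqrt{N}\log N)$ work, so that the $\log d$ factor of Theorem~\ref{theo:right-to-left-suffix-tree} is genuinely paid. Over the alphabet $\{\mathtt{a},\mathtt{b}\}$ I would take $m = \Theta(\sqrt N)$ strings $T_i = \mathtt{a}^{m-i+1}\$_i$ for $i = 1,\dots,m$, with distinct, strictly decreasing $\mathtt{a}$-lengths; their total length is $\sum_{i}(m-i+2) = \Theta(m^2) = \Theta(N)$, so $m = \Theta(\sqrt N)$ and $K-1 = m$. Because the lengths are distinct, each node $\mathtt{a}^j$ with $1 \le j \le m-1$ is branching (it is followed by $\mathtt{a}$ via a longer string and by $\$_i$ via the string of length exactly $j$), so the entire prefix path $\varepsilon,\mathtt{a},\dots,\mathtt{a}^{m-1}$ consists of explicit suffix-tree nodes.

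The construction then proceeds in two phases, matching Figures~\ref{fig:lowerbound1} and~\ref{fig:lowerbound2}. First I prepend $\mathtt{b}$ to the longest string $T_1 = \mathtt{a}^m\$_1$. Since $\mathtt{b}$ did not occur before, the climb from its leaf reaches the root and a new leaf $L = \mathtt{b}\mathtt{a}^m\$_1$ is inserted there; this insertion creates in-coming $\mathtt{b}$-Weiner links to $L$ from every explicit node on the $\mathtt{a}$-path, so that $S(L) = \{0,1,\dots,m-1,m+1\}$ has size $\Theta(\sqrt N)$. This already certifies $d = \Theta(\sqrt N)$, hence $\log d = \Theta(\log N)$. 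In the second phase I prepend $\mathtt{b}$ to the remaining strings in strictly decreasing order of length, i.e. to $T_2,T_3,\dots,T_m$. I would show by induction that, just before the prepend to $T_i$, there is a node $u_i$ carrying the ``big'' set $S(u_i)$ of size $m-i+3$, all of whose in-coming links except the single deepest one are soft, and that the climb from the leaf of $T_i$ stops at its parent $v=\mathtt{a}^{m-i+1}$, whose $\mathtt{b}$-Weiner link is soft and points into the edge entering $u_i$. This is Case~(2): a new node $y=\mathtt{b}\mathtt{a}^{m-i+1}$ with $|y|=m-i+2$ is created, and $S(u_i)$ is split at $k=|v|=m-i+1$, which peels off exactly the one deepest element and leaves a set of size $m-i+2$ sitting at $y$, ready to be split again one explicit level higher on the next step.

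Counting then closes the argument. There are $m-1 = \Theta(\sqrt N)$ splits in the second phase, and for the first $\lfloor m/2\rfloor$ of them the set being split still has size $m-i+3 \ge m/2 = \Theta(\sqrt N)$, so each such AVL split costs $\Omega(\log(m/2)) = \Omega(\log N)$; summing gives $\Omega(\sqrt N\log N)$. Conversely, there are at most $m$ splits, each costing $O(\log m)$, so the total split work is also $O(\sqrt N\log N)$. Hence on this instance the node-split procedure performs $\Theta(\sqrt N\log N)$ work, which in particular establishes the claimed bound.

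I expect the main obstacle to be the inductive bookkeeping of the second phase. One must verify that every successive prepend genuinely lands on a \emph{soft} $\mathtt{b}$-Weiner link into the current big node — so that Case~(2), and hence a split, is triggered rather than the cheap Case~(1) — and that the redirection in Case~(2) converts only the single deepest origin into a hard link, leaving the remaining $\Theta(\sqrt N)$ origins soft. Tracking this hard/soft status node-by-node is what shows the big set shrinks by exactly one per step and therefore stays of size $\Theta(\sqrt N)$ across all $\Theta(\sqrt N)$ steps. Choosing the strictly decreasing processing order is essential: it forces each climb to stop one explicit node higher than the previous one, keeping the cascade alive, whereas an equal-length instance would make the very first redirected link hard and collapse all later prepends into inexpensive Case~(1) insertions.
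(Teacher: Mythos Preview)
Your proposal is correct and follows essentially the same construction as the paper: strings $T_i=\mathtt{a}^{\text{(decreasing)}}\$_i$ of total length $\Theta(N)$, prepend $\mathtt{b}$ first to the longest string to load $\Theta(\sqrt N)$ in-coming $\mathtt{b}$-Weiner links onto a single node, then prepend $\mathtt{b}$ to the remaining strings in strictly decreasing length order so that each step triggers Case~(2) and peels one element off the big AVL tree, giving $\Theta(\sqrt N)$ splits at cost $O(\log N)$ each. Your write-up is in fact more careful than the paper's on two points---the explicit inductive tracking of the hard/soft status that guarantees Case~(2) fires every time, and the separate $\Omega$ and $O$ counts---while the paper additionally remarks that the terminal symbols $\$_i$ can be dropped (so the instance is genuinely binary) and pads with one extra string $T_K$ to reach total length exactly~$N$; you may want to add those two sentences.
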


\begin{proof}
  Let $K = 1+\lceil \sqrt{N}/2 \rceil$.

  For the time being, we assume that each string $T_i$ is terminated with
  a unique symbol $\$_i$.
  Consider a subset $\{T_1, \ldots, T_{K-1}\}$ of $K-1 = \lceil \sqrt{N}/2 \rceil$
  strings such that for each $1 \leq i \leq K-1$, 
  $T_i = \mathtt{a}^{\sqrt{N}-i+1} \$_i$.
  We then prepend the other character $\mathtt{b}$ from the binary alphabet
  $\{\mathtt{a}, \mathtt{b}\}$ to each $T_i$
  in increasing order of $i = 1, \ldots, K-1$.
  For $i = 1$, $\sqrt{N}$ Weiner links to the new leaf for $\mathtt{b}T_1 = \mathtt{b} \mathtt{a}^{\sqrt{N}} \$_1$, each labeled $\mathtt{b}$, are created.
  See Figure~\ref{fig:lowerbound1} for illustration of this step.

  Then, for each $i = 2, \ldots, K-1$,
  inserting a new leaf for $\mathtt{b}T_i$ requires an insertion of a new internal node
  as the parent of the new leaf.
  This splits the set of in-coming Weiner links into two sets:
  one is a singleton consisting of the Winer link from
  node $\mathtt{a}^{\sqrt{N}-i+1}$, and the other consists of the Weiner links from the shallower nodes.
  Each of these $K-2$ split operations can be done by a simple deletion operation on the corresponding AVL tree,
  using $O(\log \sqrt{N}) = O(\log N)$ time each.
  See Figure~\ref{fig:lowerbound2} for illustration. 
  
  Observe also that the same analysis holds even if we remove
  the terminal symbol $\$_i$ from each string $T_i$
  (in this case, there is a non-branching internal node for each $T_i$
   and we start the climbing up process from this internal node).

  The total length of these $K-1$ strings is approximately $3N/8$.
  We can arbitrarily choose the last string $T_K$ of length approximately $5N/8$
  so that it does not affect the above
  split operations (e.g., a unary string $a^{5N/8}$ or $b^{5N/8}$ would suffice).

  Thus, there exists an instance over a binary alphabet
  for which the node split operations require 
  $O(\sqrt{N} \log N)$ total time.
\end{proof}

Since $\sqrt{N} \log N = o(N)$,
the $\sqrt{N} \log N$ term is always dominated by the $N \log \sigma$ term.
It is left open whether there exists a set of strings
with $\Theta(N)$ character additions,
each of which requires splitting a set that involves
$N^{O(1)}$ in-coming Weiner links.
If such an instance exists,
then our algorithm must take $\Theta(N \log N)$ time in the worst case.

\section{Conclusions and future work}

In this paper we considered the problem of maintaining
the suffix tree and the DAWG indexing structures
for a collection of multiple strings that are updated in a fully-online manner,
where a new character can be added to the left end
or the right end of any string in the collection, respectively.
Our contributions are simple pointer-machine algorithms
that work in $O(N (\log \sigma + \log d))$ time and $O(N)$ space,
where $N$ is the total length of the strings,
$\sigma$ is the alphabet size,
and $d$ is the maximum number of in-coming Weiner links
of a node in the suffix tree.
The key idea was to reduce the sub-problem of re-directing
in-coming Weiner links to the ordered split-insert-find problem,
which we solved in $O(\log d)$ time by AVL trees.
We also discussed the cases where our $O(N (\log \sigma + \log d))$-time solution
is optimal.

A major open question regarding the proposed algorithms
is whether there exists an instance over a small alphabet
which contains $\Theta(N)$ positions
each of which requires $\Theta(\log N)$ time for the split operation,
or requires  $\Theta(N)$ insertions each taking $\Theta(\log N)$ time.
If such instances exist, then the running time of our algorithms
may be worse than the optimal $O(N \log \sigma)$ for small $\sigma$.
So far, we have only found an instance with $\sigma = 2$ that takes
sub-linear $O(\sqrt{N} \log N)$ total time for split operations.

\section*{Acknowledgements}
This work is supported by JST PRESTO Grant Number JPMJPR1922.

\bibliographystyle{abbrv}
\bibliography{ref}

\end{document}